
\documentclass[12pt]{elsarticle}




\usepackage{amsfonts}
\usepackage{amsmath}

\newtheorem{theorem}{Theorem}
\newtheorem{definition}{Definition}
\newtheorem{lemma}{Lemma}
\newtheorem{remark}{Remark}
\newtheorem{corollary}{Corollary}
\newtheorem{proposition}{Proposition}

\newtheorem{proof}{Proof}

\def\Tr{\mathrm{Tr}}
\def\id{{\bf 1}\!\!{\rm I}}

\journal{ Chaos, Solitons \& Fractals}

\begin{document}

\begin{frontmatter}

\title{An Exponential Mixing Condition for  Quantum Channels}

\author{$^{a}$ Abdessatar Souissi}
\ead{a.souaissi@qu.edu.sa}

\author{$^{b}$ Abdessatar Barhoumi}
\ead{abarhoumi@kfu.edu.sa}

\affiliation{organization={Department of Management Information Systems, College of Business and Economics},
           addressline={ Qassim University},
          state={ Buraydah 51452},
          country={Saudi Arabia}}

\affiliation{organization={Department of Mathematics and Statistics, College of Science},
           addressline={King Faisal University},
           city={Al-Ahsa PO.Box: 400},
          state={31982},
          country={Saudi Arabia}}

\begin{abstract}
 Quantum channels, pivotal in information processing, describe transformations within quantum systems and enable secure communication and error correction. Ergodic and mixing properties elucidate their behavior. In this paper, we establish a sufficient condition for mixing based on a quantum Markov-Dobrushin inequality. We prove that if the Markov-Dobrushin constant of a quantum channel exceeds zero, it exhibits exponential mixing behavior. We explore limitations of some quantum channels, demonstrating that unistochastic channels are not mixing. Additionally, we analyze ergodicity of a class of mixed-unitary channels associated with finite groups of unitary operators. Finally, we apply our results to the qubit depolarizing channel.
\end{abstract}


\begin{keyword}
Markov-Dobrushin; Quantum channels; Information Processing;  Mixing; Ergodic
\end{keyword}
\end{frontmatter}

\section{Introduction}
In quantum information processing, quantum channels, represented as completely positive trace-preserving maps, are essential for describing transformations within quantum systems \cite{CaGiLuMa14,  Qi22}. Beyond this, they are integral to quantum communication \cite{Chil2006}, like quantum teleportation and quantum key distribution \cite{Sca2009}, enabling secure transmission of quantum information. Quantum channels also play crucial roles in quantum error correction, quantum simulation . They serve as foundational tools across diverse applications in quantum information science.

The ergodic and mixing properties are of great interest for quantum systems. Ergodic and mixing conditions for quantum channels have been studied by several authors. In \cite{bau2007}, the authors investigate a Lyapunov type method to approach mixing property for quantum channels and discuss a case where mixing and ergodicity are equivalent. In \cite{MuWa2018}, the notion of S-mixing entropy of quantum channels have been introduced and investigated in connection with entanglement. In \cite{bau2013}, a systematic study of ergodicity and mixing in finite dimension have been provided. In \cite{AccLuSou22}, a quantum extension of the Markov-Dobrushin  inequality have been applied to quantum Markov operators and quantum channels.

A channel $\mathcal{M}$ is mixing if, for any pair of density operators $\rho$ and $\sigma$, repeatedly applying the channel to these densities eventually makes their difference vanish. In simpler terms, it means that after multiple transformations, the states become indistinguishable from each other. This implies, for a compactness reason, the existence of a unique fixed density operator $\rho_*$, which is the limit of the recursive transformation of an arbitrary starting input density $\rho$.

  A quantum channel $\mathcal{M}$, which operates on the space of all bounded linear operators   $\mathcal{B}(\mathcal{H})$ on a finite dimensional complex Hilbert space,  is considered ergodic if, when applied repeatedly to any input  density $\rho$, the average behavior of these iterated states converges consistently across the state space $\mathcal{B}(\mathcal{H})$. This convergence occurs regardless of the starting operator $\rho$, and it leads to a fixed density $\rho_*$, which is unique for the quantum channel $\mathcal{M}$. Ergodicity refers to the property that a system, over time, explores its entire state space and visits each part of it with sufficient frequency. It's important to recognize that, in most cases, ergodicity is considered a less stringent condition compared to mixing. This distinction was highlighted with a specific counter-example presented in  \cite{CaGiLuMa14}.

In the present paper, we prove a  sufficient condition for mixing for quantum channel, which is based on a quantum Markov-Dobrushin inequality \cite{AccLuSou22}. The main result of this research establishes a fundamental inequality governing the transformation of quantum states under the action of a quantum channel $\mathcal{M}$. Specifically, for any two initial quantum states $\rho$ and $\sigma$, the difference between their transformed states, denoted as $\mathcal{M}(\rho)$ and $\mathcal{M}(\sigma)$, respectively, is bounded by a factor dependent on the quantum Markov-Dobrushin constant $\kappa_{\mathcal{M}}$. Moreover, if $\kappa_{\mathcal{M}}$ exceeds zero, the quantum channel $\mathcal{M}$ exhibits a property known as 'mixing'. In this scenario, the convergence of $\mathcal{M}$ towards a unique fixed state $\rho_*$ is characterized by an exponentially rapid rate, governed by the parameter $\theta_{\mathcal{M}}$. Specifically, the rate of convergence is given by the expression: the difference between $\mathcal{M}^n(\rho)$ and $\rho_*$ is upper-bounded by $2e^{-n \theta_{\mathcal{M}}}$, where $\theta_{\mathcal{M}}$ is defined as $-\ln(1 - \mathrm{Tr}(\kappa_{\mathcal{M}}))$. This inequality sheds light on the behavior of quantum channels and their convergence properties, offering insights crucial for various quantum information processing tasks.  As a consequence, we show that  the quantum channel obtained by combining an of an arbitrary  channel with the completely depolarizing channel \cite{W2008} demonstrates exponential mixing behavior.

Moreover, we  explore the limitations of some quantum channels in mixing behavior. We show, for example, that unistochastic quantum channels, although bistochastic, do not display mixing behavior. Our investigation expands to the domain of mixed-unitary quantum channels, providing insights into their conduct of the mixing and convergence rate. We show that the average quantum channel associated with a finite group of unitary operators is ergodic if and only if it coincides with the completely depolarized channel.  To conclude, we investigate an explicit example of the qubit depolarizing channel and examine its  mixing and estimate its convergence rate.

Our approach is extendable to inhomogeneous quantum dynamics of quantum systems with promising implications in connections with quantum algorithms and quantum walks. An other interesting direction concerns the connection of mixing conditions for quantum channels and quantum Markov chains \cite{SSB23} and hidden quanutm Markov models \cite{SS23}. These problems  are of great interest, namely the Markov operators and quanutm channels are related in a duality relation \cite{AccOh99} that allow a quantum channels approach to the tudy of stochastic properties of quantum Markov processes. These problems  will be addressed in future works.

Let's briefly outline the structure of this paper. We begin with some preliminary notations in Section \ref{sect_prel}. Following that, our main focus is on proving the exponential mixing condition for quantum channels in Section \ref{sect_main}. In Section \ref{sect_Erg_mix}, we illustrate the implications of our results by examining the ergodicity and mixing properties of a specific class of mixed-unitary channels, supported by concrete examples. Finally, we conclude with some remarks in Section \ref{sect_concl}.

\section{Preliminaries}\label{sect_prel}

 Let $d\in\mathbb{N}$. Consider a  d-dimensional Hilbert space $\mathcal{H}$. By $\mathcal{B}(\mathcal{H})$, we denote the  algebra of (bounded) linear operators on $H$ with unity $\id$.

Denote $\mathfrak{S}(\mathcal{H})$ the subset of $\mathcal{B}(\mathcal{H})$   of all density operators. Notice that $\mathfrak{S}(\mathcal{H})$ is convex and compact.

 \begin{theorem}[{\cite{Choi75}}]
A super-operator $\mathcal{M}$ from  $\mathcal{B}(\mathcal{H})$ into itself is completely positive iff it can be expressed in terms  of a finite collection of operators $\{K_i\}_{i\in I}$ as follows
\begin{equation}\label{Krauss}
\mathcal{M}(A) = \sum_{i\in I} K_iAK_i^*,
\end{equation}
where $K_i$ are referred to as Kraus operators \cite{Kraus83}.
\end{theorem}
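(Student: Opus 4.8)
The plan is to prove the two implications separately. The forward direction (the Kraus form forces complete positivity) is routine, while the converse carries the real content and I would establish it through the Choi matrix construction.

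First I would dispose of sufficiency. Suppose $\mathcal{M}(A) = \sum_{i\in I} K_i A K_i^*$. For any $n\in\mathbb{N}$ and any positive $X \in M_n(\mathbb{C}) \otimes \mathcal{B}(\mathcal{H})$, a short computation on elementary tensors gives
\[
(\mathrm{id}_n \otimes \mathcal{M})(X) = \sum_{i\in I} (\id_n \otimes K_i)\, X\, (\id_n \otimes K_i)^*,
\]
which is a sum of operators of the form $L X L^*$ with $X \geq 0$, hence positive. Since this holds for every $n$, the map $\mathcal{M}$ is completely positive.

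For the converse I would fix an orthonormal basis $\{e_1,\dots,e_d\}$ of $\mathcal{H}$, write $E_{pq} = |e_p\rangle\langle e_q|$ for the associated matrix units, and introduce the Choi matrix
\[
C_{\mathcal{M}} = \sum_{p,q=1}^{d} E_{pq} \otimes \mathcal{M}(E_{pq}) = (\mathrm{id}_d \otimes \mathcal{M})\Big(\sum_{p,q} E_{pq} \otimes E_{pq}\Big) \in \mathcal{B}(\mathcal{H}) \otimes \mathcal{B}(\mathcal{H}).
\]
The operator $\sum_{p,q} E_{pq}\otimes E_{pq}$ equals $|\Omega\rangle\langle\Omega|$ with $|\Omega\rangle = \sum_p e_p \otimes e_p$, hence is positive; applying complete positivity of $\mathcal{M}$ at dimension $d$ yields $C_{\mathcal{M}} \geq 0$.

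The decisive step is to read off the Kraus operators from a spectral decomposition of $C_{\mathcal{M}}$. Since $C_{\mathcal{M}}$ is a positive operator on the $d^2$-dimensional space $\mathcal{H}\otimes\mathcal{H}$, I would diagonalize it as $C_{\mathcal{M}} = \sum_{i\in I} |v_i\rangle\langle v_i|$ with $|v_i\rangle \in \mathcal{H}\otimes\mathcal{H}$ and $|I| \leq d^2$. Using the vectorization correspondence, each $|v_i\rangle$ determines a unique $K_i \in \mathcal{B}(\mathcal{H})$ through $|v_i\rangle = \sum_p e_p \otimes K_i e_p$, whence
\[
|v_i\rangle\langle v_i| = \sum_{p,q} E_{pq} \otimes K_i E_{pq} K_i^*.
\]
Summing over $i$ and matching coefficients against the definition of $C_{\mathcal{M}}$, using that $\{E_{pq}\}$ is a basis of $\mathcal{B}(\mathcal{H})$, gives $\mathcal{M}(E_{pq}) = \sum_i K_i E_{pq} K_i^*$ for all $p,q$, and the Kraus form follows by linearity. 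The main obstacle is precisely this translation: one must set up the vec-isomorphism between $\mathcal{H}\otimes\mathcal{H}$ and $\mathcal{B}(\mathcal{H})$ with the correct index placement, so that the rank-one pieces $|v_i\rangle\langle v_i|$ correspond exactly to the sandwich maps $A \mapsto K_i A K_i^*$; the remainder is bookkeeping.
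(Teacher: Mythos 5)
Your argument is correct: the sufficiency direction is the routine computation, and your converse is the standard Choi-matrix proof (positivity of $C_{\mathcal{M}}=(\mathrm{id}_d\otimes\mathcal{M})(|\Omega\rangle\langle\Omega|)$, spectral decomposition into rank-one pieces, and the vectorization $|v_i\rangle=\sum_p e_p\otimes K_ie_p$ giving the Kraus operators). The paper itself states this theorem without proof, citing Choi (1975); your route is precisely the argument of that reference, so there is nothing to compare beyond noting that your write-up is complete and the index bookkeeping in the vec-isomorphism works out as you claim.
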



\begin{definition}
{\rm  A quantum channel is  defined to be a linear map that is completely positive and trace-preserving (CPTP), $\mathcal{M}$ from  $\mathcal{B}(\mathcal{H})$ into $\mathcal{B}(\mathcal{H})$.
}\end{definition}
The $n^{th}$ power of the quantum channel $\mathcal{M}$ corresponds to its composition  $n$ times
$$
\mathcal{M}^n = \mathcal{M}\circ \mathcal{M}\circ\cdots \circ \mathcal{M} $$
where $\mathcal{M}^0 = \mathcal{I}$ is the identity map on $\mathcal{B}(\mathcal{H})$.
\begin{remark}
Let $\mathcal{M} : \mathcal{B}(\mathcal{H}) \rightarrow \mathcal{B}(\mathcal{K})$ be a completely positive linear map. The Kraus decomposition for $\Phi$ can be represented as:
\begin{equation}
    \mathcal{M}(\rho) = \sum_{i} K_i \rho K_i^{*}
\end{equation}
where $K_i$ are the Kraus operators \cite{Kraus83}. It is worth noting that the condition for preserving the trace holds if and only if:
\begin{equation}
    \sum_{i} K_i^{*} K_i = \id
\end{equation}
\end{remark}

 \begin{definition}
   A quantum channel $\mathcal{M}$ acting on $\mathcal{B}(\mathcal{H})$ is said to be ergodic if, for any pair of density matrices $\rho \in \mathcal{B}(\mathcal{H})$, the time average of their iterates converges almost everywhere in the state space $\mathcal{B}(\mathcal{H})$:
\[
\lim_{n\to\infty} \frac{1}{n+1} \sum_{k=0}^{n} \mathcal{M}^k(\rho) = \rho_*
\]
where $\rho_*$ is the unique fixed point of the quantum channel $\mathcal{M}$, independent of the initial state $\rho$.
 \end{definition}

\begin{definition}
A quantum channel $\mathcal{M}$ is considered mixing if, for any pair of density matrices $\rho, \sigma \in \mathcal{B}(\mathcal{H})$, the difference between their iterates tends to zero as the number of iterations approaches infinity:
\begin{equation}\label{eq_df_mixing}
\lim_{n\to\infty} \|\mathcal{M}^n(\rho) - \mathcal{M}^n(\sigma)\| = 0
\end{equation}
where $\|\cdot\|$ denotes any norm defined on the (finite dimensional) algebra $\mathcal{B}(\mathcal{H})$.
\end{definition}

\begin{remark}
Quantum mixing implies that the influence of the initial states of the system decreases rapidly with time, causing the system to lose remnants of previous states These decaying interactions are usually quantified by the decay rate, indicating a rapid decrease in correlations.  Quantum mixing is thus considered more robust than ergodicity as it signifies a deeper level of randomness and disorder in the system, characterized by a rapid loss of correlation over time. However, it's important to note that the ergodicity of a quantum channel does not necessarily imply its mixing \cite{bau2007}.
 \end{remark}

\section{Main Results}\label{sect_main}
This section presents key results regarding the behavior and properties of quantum channels.   The main theorem provides a mixing condition for quantum channels, particularly focusing on their convergence rates using a quantum Markov-Dobrushin inequality.

Recall that, any operator $a\in\mathcal{B}(\mathcal{H})$ has a decomposition
$$
a = \frac{1}{2}(a + a^*) + i\frac{1}{2i}(a - a^*)
$$
into a real part $\Re(a) = \frac{1}{2}(a + a^*)$ and imaginary part $\Im(a) =\frac{1}{2i}(a - a^*)$ and both of them are real self-adjoint operators.
It is known that any real  self-adjoint  operator $b$ is the difference of two positive operators:
$$
b = b_+ -  b_-; \qquad b_+b_- = 0
$$
with disjoint support projections.

We define $\|b\|_1 = \Tr(b_+) + \Tr(b_{-})$
 In the upcoming lemma, we establish the  norm $\|\cdot \|$  in $\mathcal{M}_d(\mathbb{C})$ as an extension of the norm $\|\cdot\|_{1}$ to encompass the entire algebra. This norm plays a central role in the subsequent discussions. The lemma presented below has been  proved in \cite{AccLuSou22}.
\begin{lemma}{\rm
Define
\begin{equation}\label{df-q-tot-var-nrm}
\|A\|_{TV}
:= \text{Tr}\left(\Re(a)_+ +  \Re(a)_-\right)
+ \text{Tr}\left(\Im(a)_+ + \Im(a)_-\right)
\end{equation}
The expression $\| (\cdot) \|_{TV}$ thus defined serves as a norm on $\mathcal{B}(\mathcal{H})$, considering it as a
\textbf{real vector space}.
}\end{lemma}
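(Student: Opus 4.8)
The plan is to verify the four norm axioms—non-negativity, definiteness, absolute homogeneity under real scalars, and the triangle inequality—by reducing each to the corresponding statement for the self-adjoint trace norm $\|b\|_1 = \Tr(b_+)+\Tr(b_-) = \Tr|b|$ introduced just above. The organizing observation is that $\Re$ and $\Im$ are $\mathbb{R}$-linear and that $a\mapsto(\Re(a),\Im(a))$ is a real-linear bijection of $\mathcal{B}(\mathcal{H})$ onto the direct sum of two copies of the real space of self-adjoint operators, with inverse $(b,c)\mapsto b+ic$. Under this identification one has $\|a\|_{TV}=\|\Re(a)\|_1+\|\Im(a)\|_1$, so $\|\cdot\|_{TV}$ is exactly the pull-back of the $\ell^1$-type sum of two copies of $\|\cdot\|_1$. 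Thus it suffices to show that $\|\cdot\|_1$ is a norm on the real vector space of self-adjoint operators.

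First I would dispatch the elementary axioms. Non-negativity is immediate since $\Tr(b_\pm)\ge 0$. For definiteness, diagonalizing a self-adjoint $b$ gives $\|b\|_1=\sum_i|\lambda_i|$, the sum of the moduli of its eigenvalues, which vanishes iff every $\lambda_i=0$, i.e.\ iff $b=0$; applying this to $\Re(a)$ and $\Im(a)$ shows that $\|a\|_{TV}=0$ forces $\Re(a)=\Im(a)=0$, hence $a=0$. For homogeneity with a real scalar $\lambda$, note $(\lambda b)_+=\lambda b_+$ and $(\lambda b)_-=\lambda b_-$ when $\lambda\ge 0$, while the two summands swap (up to the factor $|\lambda|$) when $\lambda<0$, so in either case $\|\lambda b\|_1=|\lambda|\,\|b\|_1$; combined with $\Re(\lambda a)=\lambda\Re(a)$ and $\Im(\lambda a)=\lambda\Im(a)$ this yields $\|\lambda a\|_{TV}=|\lambda|\,\|a\|_{TV}$. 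Here it is essential that $\lambda$ be real: multiplication by $i$ interchanges real and imaginary parts and need not preserve the value, which is precisely why the statement is asserted only for the real vector space structure.

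The main obstacle is the triangle inequality, and I expect it to rest on a variational (dual) description of the self-adjoint trace norm. The key step is to prove
\begin{equation*}
\|b\|_1=\Tr|b|=\max\{\Tr(bX):\ X=X^*,\ -\id\le X\le \id\},
\end{equation*}
the maximum being attained at the signature operator $X=P_+-P_-$ built from the support projections $P_\pm$ of $b_\pm$. The bound $\Tr(bX)\le\|b\|_1$ follows by expanding in the eigenbasis of $b$ and using $-1\le\langle e_i|X|e_i\rangle\le 1$, while the chosen $X$ gives equality. Writing $\|\cdot\|_1$ as a maximum of the linear functionals $X\mapsto\Tr(bX)$ makes subadditivity automatic: $\Tr((b+c)X)=\Tr(bX)+\Tr(cX)$, and the maximum of a sum is at most the sum of the maxima, yielding $\|b+c\|_1\le\|b\|_1+\|c\|_1$.

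Finally I would assemble the pieces. Using the additivity $\Re(a+a')=\Re(a)+\Re(a')$ and $\Im(a+a')=\Im(a)+\Im(a')$ together with the self-adjoint triangle inequality applied separately to the real and imaginary parts gives
\begin{equation*}
\|a+a'\|_{TV}\le\|\Re(a)\|_1+\|\Re(a')\|_1+\|\Im(a)\|_1+\|\Im(a')\|_1=\|a\|_{TV}+\|a'\|_{TV}.
\end{equation*}
Since non-negativity, definiteness, real homogeneity and the triangle inequality all hold, $\|\cdot\|_{TV}$ is a norm on $\mathcal{B}(\mathcal{H})$ viewed as a real vector space, as claimed.
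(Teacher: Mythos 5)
Your proposal is correct. Note that the paper itself offers no proof of this lemma --- it is stated with the remark that it ``has been proved in \cite{AccLuSou22}'' --- so there is no in-text argument to compare against; your write-up supplies a self-contained verification where the paper defers to a reference. Your route is the standard one: reduce to the self-adjoint trace norm $\|b\|_1=\Tr(b_+)+\Tr(b_-)=\Tr|b|$ via the real-linear identification $a\mapsto(\Re(a),\Im(a))$, handle non-negativity, definiteness and real homogeneity by spectral decomposition and the Jordan decomposition's behaviour under scaling, and obtain subadditivity from the dual characterization $\|b\|_1=\max\{\Tr(bX): X=X^*,\ -\id\le X\le \id\}$, which expresses the norm as a supremum of linear functionals. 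All steps check out, including the attainment of the maximum at the signature operator $P_+-P_-$ and the observation that homogeneity genuinely fails for non-real scalars (e.g.\ $\lambda=e^{i\pi/4}$, $a=\id$), which correctly explains why the lemma restricts to the real vector space structure.
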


    Recall that if $I$ is a set and $(b_{\alpha})_{\alpha\in I}$ is a family of positive operators such that $\|b_{\alpha}\|\le K$ for some $K\in\mathbb{R}_{+}$, we express
    \begin{equation}\label{definition-inf-pos-ops}         \inf\{b_{\alpha} \ : \ \alpha\in I\}         := \sup\{ b\in \mathcal{A} \ : \ 0\le b \le  b_{\alpha} \ , \ \forall \alpha\in I\}     \end{equation}

\begin{theorem}\label{thm_main}{\rm
    Let $\mathcal{M}: \mathcal{B}(\mathcal{H}) \to \mathcal{B}(\mathcal{H})$ be a quantum channel. Then:
    \begin{enumerate}
        \item For any $\rho, \sigma \in \mathfrak{S}(\mathcal{H})$, the following operator inequality holds:
        \begin{equation}\label{q-MD-ineq}
            \|\mathcal{M}(\rho) - \mathcal{M}(\sigma)\|_{\text{TV}} \leq \|\rho - \sigma\|_{\text{TV}}\big(1 - \operatorname{Tr}(\kappa_{\mathcal{M}})\big)
        \end{equation}
        where
        \begin{equation}\label{definition-kappa-q}
            \kappa_{\mathcal{M}} := \inf\big\{\mathcal{M}(|\xi \rangle\langle \xi|) : \xi\in\mathcal{H}; \;  \|\xi\| =1 \big\} \in \mathcal{B}(\mathcal{H})_+
        \end{equation}
        \item Moreover, if $\kappa_{\mathcal{M}} > 0$ then $\mathcal{M}$ is mixing, and its convergence to the unique fixed point $\rho_*$ on $\mathfrak{S}(\mathcal{H})$ is exponentially fast with convergence rate satisfying
        \begin{equation}\label{eq-cv-rate}
            \|\mathcal{M}^n(\rho) - \rho_{*}\| \leq 2e^{-n \theta_{\mathcal{M}; \, n}}
        \end{equation}
        for every $\rho \in \mathfrak{S}(\mathcal{H})$. Here $\theta_{\mathcal{M}}:= -\ln(1 - \Tr(\kappa_{\mathcal{M}}))$.
    \end{enumerate} }
\end{theorem}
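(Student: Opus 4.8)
The plan is to treat this as the quantum counterpart of the classical Dobrushin contraction: prove the operator inequality in part (1) and then bootstrap part (2) from it by iteration together with a fixed-point argument. The structural fact that drives everything is that \emph{every} density is mapped below $\kappa_{\mathcal{M}}$, i.e. $\mathcal{M}(\rho)\ge \kappa_{\mathcal{M}}$ in the operator order for all $\rho\in\mathfrak{S}(\mathcal{H})$. First I would establish this by taking the spectral decomposition $\rho=\sum_j\lambda_j|\xi_j\rangle\langle\xi_j|$ (with $\lambda_j\ge 0$, $\sum_j\lambda_j=1$, and $\xi_j$ unit vectors), applying linearity and positivity of $\mathcal{M}$, and using the defining property $\kappa_{\mathcal{M}}\le \mathcal{M}(|\xi\rangle\langle\xi|)$ for every unit $\xi$. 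Next I would reduce the two-state problem to orthogonally supported densities: for $\rho,\sigma\in\mathfrak{S}(\mathcal{H})$ the operator $\rho-\sigma$ is self-adjoint and traceless, so its Jordan parts $(\rho-\sigma)_\pm$ have a common trace $t$, giving $\|\rho-\sigma\|_{TV}=2t$, and writing $\rho-\sigma=t(P-Q)$ with $P=(\rho-\sigma)_+/t$ and $Q=(\rho-\sigma)_-/t$ reduces the estimate to bounding $\|\mathcal{M}(P)-\mathcal{M}(Q)\|_{TV}$ for two density operators.

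For the core contraction, set $c=\Tr(\kappa_{\mathcal{M}})\in[0,1]$ (the upper bound by trace preservation) and write $D:=\mathcal{M}(P)-\mathcal{M}(Q)=(\mathcal{M}(P)-\kappa_{\mathcal{M}})-(\mathcal{M}(Q)-\kappa_{\mathcal{M}})$. By the lower bound above, both summands are positive operators, each of trace $1-c$. I would then invoke the variational identity $\Tr(D_+)=\max_{0\le X\le\id}\Tr(XD)$ for self-adjoint $D$: for any $0\le X\le\id$ one has $\Tr\big(X(\mathcal{M}(P)-\kappa_{\mathcal{M}})\big)\le 1-c$ and $\Tr\big(X(\mathcal{M}(Q)-\kappa_{\mathcal{M}})\big)\ge 0$, so $\Tr(D_+)\le 1-c$, and tracelessness gives $\|D\|_{TV}=2\Tr(D_+)\le 2(1-c)$. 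Multiplying back by $t$ and recalling $\|\rho-\sigma\|_{TV}=2t$ yields $\|\mathcal{M}(\rho)-\mathcal{M}(\sigma)\|_{TV}\le\|\rho-\sigma\|_{TV}\,(1-\Tr(\kappa_{\mathcal{M}}))$, which is (1).

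For part (2), assume $\kappa_{\mathcal{M}}>0$, so $c>0$ and $0\le 1-c<1$. Iterating (1) gives $\|\mathcal{M}^n(\rho)-\mathcal{M}^n(\sigma)\|_{TV}\le(1-c)^n\|\rho-\sigma\|_{TV}\le 2(1-c)^n$, using $\|\rho-\sigma\|_{TV}\le 2$ for densities; since $(1-c)^n=e^{-n\theta_{\mathcal{M}}}$ with $\theta_{\mathcal{M}}=-\ln(1-\Tr(\kappa_{\mathcal{M}}))$, mixing follows in the $TV$ norm, hence in every norm by equivalence of norms in finite dimension. Because $\mathcal{M}$ carries the complete (closed, compact) metric space $\mathfrak{S}(\mathcal{H})$ into itself and is a strict contraction for the $TV$ metric, the Banach fixed point theorem furnishes a unique fixed point $\rho_*\in\mathfrak{S}(\mathcal{H})$ to which every orbit converges, with uniqueness also immediate from strict contractivity. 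Setting $\sigma=\rho_*$ in the iterated estimate then gives $\|\mathcal{M}^n(\rho)-\rho_*\|_{TV}\le 2e^{-n\theta_{\mathcal{M}}}$, the asserted rate (with the exponent constant in $n$).

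The hard part is the core estimate of the second paragraph: obtaining the sharp factor $(1-c)$ rather than $2(1-c)$ depends crucially on first normalizing to orthogonally supported densities $P,Q$, so that the factor of two is absorbed by $\|P-Q\|_{TV}=2$, and on keeping every operator self-adjoint so that the imaginary-part contribution to $\|\cdot\|_{TV}$ never enters. A secondary point to treat with care is the well-definedness of the operator infimum $\kappa_{\mathcal{M}}$ and, in particular, that it is a genuine common lower bound $\kappa_{\mathcal{M}}\le\mathcal{M}(|\xi\rangle\langle\xi|)$ for all unit $\xi$; I would take this from the preliminaries and \cite{AccLuSou22} rather than reprove it, since it is exactly the property that feeds the lower bound $\mathcal{M}(\rho)\ge\kappa_{\mathcal{M}}$ used throughout.
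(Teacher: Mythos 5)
Your proof is correct, and it reaches the contraction estimate by a genuinely different mechanism than the paper. The paper diagonalizes $\rho-\sigma$ and works term by term: it inserts $\kappa_{\mathcal{M}}$ via the zero-sum identity $\sum_{+}|\lambda_\ell|-\sum_{-}|\lambda_\ell|=0$, rewrites $\mathcal{M}(\rho-\sigma)$ as a signed combination of the positive operators $\mathcal{M}(|\ell\rangle\langle\ell|)-\kappa_{\mathcal{M}}$, and then applies the triangle inequality to each rank-one summand, using $\|\mathcal{M}(|\ell\rangle\langle\ell|)-\kappa_{\mathcal{M}}\|_{TV}=\Tr(\mathcal{M}(|\ell\rangle\langle\ell|)-\kappa_{\mathcal{M}})\le 1-\Tr(\kappa_{\mathcal{M}})$. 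You instead renormalize the Jordan parts into two orthogonally supported densities $P,Q$ and bound $\Tr(D_+)$ for $D=(\mathcal{M}(P)-\kappa_{\mathcal{M}})-(\mathcal{M}(Q)-\kappa_{\mathcal{M}})$ by the variational identity $\Tr(D_+)=\max_{0\le X\le\id}\Tr(XD)$; this is the standard duality argument for trace-norm contraction coefficients, it keeps only two positive operators in play, and it makes transparent why the factor is $(1-\Tr\kappa_{\mathcal{M}})$ rather than $2(1-\Tr\kappa_{\mathcal{M}})$. For part (2) the paper argues via the shrinking diameters of the nested compact sets $\mathcal{M}^n(\mathfrak{S}(\mathcal{H}))$ to produce $\rho_*$, whereas you invoke the Banach fixed point theorem on the complete metric space $(\mathfrak{S}(\mathcal{H}),\|\cdot\|_{TV})$; both are valid, yours is the more standard route and also sidesteps a notational slip in the paper where the scalar contraction factor is written as the operator $(1-\kappa_{\mathcal{M}})$ instead of $(1-\Tr(\kappa_{\mathcal{M}}))$. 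The one point both you and the paper lean on without proof is that the operator infimum $\kappa_{\mathcal{M}}$, defined as a supremum of common lower bounds, is itself a common lower bound of all $\mathcal{M}(|\xi\rangle\langle\xi|)$ (the operator order is not a lattice, so this is not automatic); you flag this explicitly and defer to the cited reference, which is the right call.
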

\begin{proof}
  Let $\rho$ and $\sigma$ belong to the set of density matrices $\mathfrak{S}(\mathcal{H})$. Then, $\rho - \sigma$ can be expressed as a linear combination of rank-one projectors onto the basis elements of an orthonormal basis $\{ |\ell\rangle , 1\le \ell\le d \}$ of the Hilbert space $\mathcal{H}$. These projectors are associated with eigenvalues $\lambda_{\rho - \sigma; \ell}$, where $\lambda_{\rho - \sigma; \ell}$ denotes the eigenvalue corresponding to the eigenvector $|\ell\rangle$.
  \[
  \rho - \sigma = \sum_{\ell} \lambda_{\rho - \sigma; \ell} |\ell\rangle\langle \ell| = \sum_{\ell\in S_{+}(\rho - \sigma)}|\lambda_{\rho - \sigma; \ell}|\; |\ell\rangle\langle \ell| - \sum_{\ell\in S_{-}(\rho - \sigma)}|\lambda_{\rho - \sigma; \ell}| \; |\ell\rangle\langle \ell|
  \]
  Define sets $S_{\rho -\sigma; +}$ and $S_{\rho-\sigma; -}$ containing indices $\ell$ such that the associated eigenvalues are positive and negative, respectively.
  \[
  S_{\rho -\sigma; +}=  \{ \ell: \lambda_{\rho-\sigma; \ell}>0 \};\qquad S_{\rho-\sigma; -} = \{1,2,\dots, d\}\setminus S_{\rho-\sigma; +}
  \]
  We observe that the difference between the sums of positive and negative eigenvalues of $\rho - \sigma$ is zero.
  \[
  \sum_{\ell\in S_{\rho - \sigma; +}}|\lambda_{\rho - \sigma; \ell}|   - \sum_{\ell\in S_{\rho - \sigma; -}}|\lambda_{\rho - \sigma; \ell}|  = 0
  \]
  This leads to an expression for the total variation norm $\|\mathcal{M}(\rho-\sigma) \|_{TV}$ in terms of the eigenvalues and the quantum channel $\mathcal{M}$.
  \begin{align*}
  \|\mathcal{M}(\rho-\sigma) \|_{TV}
  &=  \Big\| \sum_{\ell\in S_{\rho-\sigma;+}}|\lambda_{\rho-\sigma;\ell}| \mathcal{M}(|\ell\rangle\langle \ell|)  -  \Big(\sum_{\ell\in S_{+}(\rho - \sigma)}|\lambda_{\rho - \sigma; \ell}|   - \sum_{\ell\in S_{-}(\rho - \sigma)}|\lambda_{\rho - \sigma; \ell}|\Big)\kappa_{\mathcal{M}}\\
  &-  \sum_{\ell\in S_{\rho-\sigma;-}}|\lambda_{\rho-\sigma;\ell}| \mathcal{M}(|\ell\rangle\langle \ell|)\Big\|_{TV}\\
  &=  \Big\| \sum_{\ell\in S_{\rho-\sigma;+}}|\lambda_{\rho-\sigma;\ell}| \Big(\mathcal{M}(|\ell\rangle\langle \ell|)  - \kappa_{\mathcal{M}}\Big) -  \sum_{\ell\in S_-}|\lambda_{\rho-\sigma;\ell}|\Big(\mathcal{M}(|\ell\rangle\langle \ell|)  - \kappa_{\mathcal{M}}\Big) \Big\|_{TV}\\
  &\le    \sum_{\ell\in S_{\rho-\sigma;+}}|\lambda_{\rho-\sigma;\ell}| \Big\|\mathcal{M}(|\ell\rangle\langle \ell|)  - \kappa_{\mathcal{M}}\Big\|_{TV} +  \sum_{\ell\in S_-}|\lambda_{\rho-\sigma;\ell}|\Big\|\mathcal{M}(|\ell\rangle\langle \ell|)  - \kappa_{\mathcal{M}}\Big\|_{TV}\\
  \end{align*}
  From the minimality of $\kappa_{\mathcal{M}}$, we get
  \[
  \Big\|\mathcal{M}(|\ell\rangle\langle \ell|)  - \kappa_{\mathcal{M}}\Big\|_{TV} = {\rm Tr}(\mathcal{M}(|\ell\rangle\langle \ell|) -\kappa_{\mathcal{M}}) \le 1- {\rm Tr}(\kappa_{\mathcal{M}})
  \]
  Utilizing properties of the total variation norm, we obtain
  \[
  \|\mathcal{M}(\rho)-\mathcal{M}(\sigma)\|_{TV} \le \Big(  \sum_{\ell\in S_{\rho-\sigma;+}}|\lambda_{\rho-\sigma;\ell}|    +  \sum_{\ell\in S_-}|\lambda_{\rho-\sigma;\ell}|\Big)(1- {\rm Tr}(\kappa)) = \|\rho -\sigma\|_{TV}\big(1- {\rm Tr}(\kappa)\big)
  \]
  which is  (\ref{q-MD-ineq}).
  For the second part of the theorem, one has
  \[
  \|\mathcal{M}^{n}(\rho)- \mathcal{M}^{n}(\sigma)\|  \le  (1 - \kappa_{\mathcal{M}})\|\mathcal{M}^{n-1}(\rho)- \mathcal{M}^{n-1}(\sigma)\|\le\cdots\le (1 - \kappa_{\mathcal{M}})^{n}\|\rho -\sigma\|
  \]
    It follows that
  \begin{equation}\label{eqMnsigmarho}
  \delta(\mathcal{M}^{n}(\mathcal{S}(\mathcal{H}))) := \sup_{(\rho,\sigma)\in \mathcal{S}(\mathcal{H})^2}\|\mathcal{M}^{n}(\rho)- \mathcal{M}^{n}(\sigma)\|  \le 2\big(1 - \kappa_{\mathcal{M}}\big)^{n}
  \end{equation}
  Then the diameters $\delta(\mathcal{M}^{n}(\mathcal{S}(\mathcal{H})))$ of the decreasing sequence of compact sets $\mathcal{M}^{n}(\mathcal{S}(\mathcal{H}))$ converge to $0$. It follows that there exists $\rho_*\in\mathcal{S}(\mathcal{H})$ satisfying
  \[
  \bigcap_{n\ge 0}\mathcal{M}^{n}(\mathcal{S}(\mathcal{H})) =\{\rho_*\}
  \]
  and $\rho_*$ is a fixed point of the quantum channel $\mathcal{M}$.
  Therefore, for every $\rho\in \mathcal{S}(\mathcal{H})$, the sequence $\mathcal{M}^{n}(\rho)$ converges to $\rho_*$.
  From (\ref{eqMnsigmarho}), we obtain (\ref{eq-cv-rate}).
\end{proof}
\begin{remark}
In the work by Accardi, Lu, and Souissi   \cite{AccLuSou22}, an operator-based adaptation of the Markov-Dobrushin inequality has been introduced, expanding upon the classical formulation. They also demonstrated its applicability to determining the convergence rate of certain classical Markov chains. Here, we further extend this convergence rate to encompass a broader class of quantum channels using equation (\ref{eq-cv-rate}).
\end{remark}
The completely depolarizing channel operates on $\mathcal{B}(\mathcal{H})$ and is defined as:
\begin{equation}\label{depola_channel}
\Omega(a) = \frac{\text{Tr}(a)}{d} \cdot \id
\end{equation}
The   channel $\Omega$ is mixing, its quantum Markov-Dobrushin constant is $\kappa_{\Omega} = \frac{1}{d}\cdot \id$ then $\Tr(\kappa_{\Omega}) = 1$ and its fixed point is $\rho_* = \frac{1}{d}\cdot \id$.
\begin{corollary}{\rm
  For any quantum channel $\mathcal{M}$ on $\mathcal{B}(\mathcal{H})$ and any $\alpha\in (0,1)$. The quantum channel
  \begin{equation}\label{MalphaOmega}
    \mathcal{M}_{\alpha} = \alpha \mathcal{M} + (1-\alpha)\Omega
  \end{equation}
  is exponentially mixing , where $\Omega$ is  the completely depolarizing channel defined in (\ref{depola_channel}). Furthermore, the convergence rate of $\mathcal{M}_{\alpha}$ satisfies $\theta_{\mathcal{M}_{\alpha}}\geq -\ln(\alpha)$ }.
\end{corollary}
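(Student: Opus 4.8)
The plan is to reduce the entire statement to the second part of Theorem~\ref{thm_main} by exhibiting a strictly positive, $\xi$-independent lower bound for the Markov--Dobrushin constant $\kappa_{\mathcal{M}_\alpha}$. First I would evaluate the mixed channel on an arbitrary rank-one projection: for any unit vector $\xi\in\mathcal{H}$, the definition (\ref{MalphaOmega}) together with the formula (\ref{depola_channel}) for the completely depolarizing channel gives
\[
\mathcal{M}_\alpha(|\xi\rangle\langle\xi|) = \alpha\,\mathcal{M}(|\xi\rangle\langle\xi|) + (1-\alpha)\,\frac{\Tr(|\xi\rangle\langle\xi|)}{d}\,\id = \alpha\,\mathcal{M}(|\xi\rangle\langle\xi|) + \frac{1-\alpha}{d}\,\id,
\]
where I have used $\Tr(|\xi\rangle\langle\xi|) = \|\xi\|^2 = 1$. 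The decisive observation is that $\mathcal{M}$ is completely positive, so $\mathcal{M}(|\xi\rangle\langle\xi|)\ge 0$; discarding this positive summand yields the uniform estimate $\mathcal{M}_\alpha(|\xi\rangle\langle\xi|) \ge \frac{1-\alpha}{d}\,\id \ge 0$ for every unit $\xi$.

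Next I would translate this pointwise operator bound into a bound on the infimum defining $\kappa_{\mathcal{M}_\alpha}$ in (\ref{definition-kappa-q}). Since $\frac{1-\alpha}{d}\,\id$ is a positive operator dominated by $\mathcal{M}_\alpha(|\xi\rangle\langle\xi|)$ for every unit $\xi$, it is one of the admissible operators $b$ appearing in the supremum (\ref{definition-inf-pos-ops}); hence
\[
\kappa_{\mathcal{M}_\alpha} \ge \frac{1-\alpha}{d}\,\id > 0 \qquad (\alpha\in(0,1)).
\]
Taking traces and using monotonicity of $\Tr$ on positive operators then gives $\Tr(\kappa_{\mathcal{M}_\alpha}) \ge \frac{1-\alpha}{d}\cdot d = 1-\alpha$. (If one wants the sharp value, the same computation combined with the behaviour of the operator infimum under a positive scalar multiple and an added constant operator gives the identity $\kappa_{\mathcal{M}_\alpha} = \alpha\,\kappa_{\mathcal{M}} + \frac{1-\alpha}{d}\,\id$, but only the lower bound is needed below.)

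Finally, strict positivity $\kappa_{\mathcal{M}_\alpha} > 0$ is precisely the hypothesis of the second assertion of Theorem~\ref{thm_main}, so $\mathcal{M}_\alpha$ is mixing with exponentially fast convergence in the sense of (\ref{eq-cv-rate}). For the rate, the bound $\Tr(\kappa_{\mathcal{M}_\alpha}) \ge 1-\alpha$ yields $1 - \Tr(\kappa_{\mathcal{M}_\alpha}) \le \alpha$, and since $t\mapsto -\ln t$ is decreasing we conclude $\theta_{\mathcal{M}_\alpha} = -\ln\!\big(1 - \Tr(\kappa_{\mathcal{M}_\alpha})\big) \ge -\ln\alpha$, as claimed. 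I expect the only genuinely delicate point to be the second step: one must justify passing from the pointwise domination to $\kappa_{\mathcal{M}_\alpha} \ge \frac{1-\alpha}{d}\,\id$ directly against the supremum-of-dominated-operators definition (\ref{definition-inf-pos-ops}), rather than treating the operator infimum as a naive pointwise minimum; everything else is elementary algebra and an invocation of the main theorem.
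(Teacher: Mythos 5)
Your proposal is correct and follows essentially the same route as the paper: evaluate $\mathcal{M}_\alpha$ on rank-one projections, drop the positive term $\alpha\,\mathcal{M}(|\xi\rangle\langle\xi|)$ to get the uniform lower bound $\frac{1-\alpha}{d}\,\id$, conclude $\kappa_{\mathcal{M}_\alpha}\ge\frac{1-\alpha}{d}\,\id>0$, invoke Theorem~\ref{thm_main}, and take traces to obtain $\theta_{\mathcal{M}_\alpha}\ge-\ln\alpha$. Your extra care in justifying the passage from the pointwise domination to the bound on the operator infimum via (\ref{definition-inf-pos-ops}) is a welcome refinement of a step the paper leaves implicit, but it does not change the argument.
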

 \begin{proof}
   Let $\rho\in\mathfrak{S}(\mathcal{H})$. We have
   $$
    \mathcal{M}_{\alpha}(\rho) = \alpha \mathcal{M}(\rho) + (1-\alpha)\Omega(\rho) = \alpha \mathcal{M}(\rho) + \frac{1-\alpha}{d}\cdot \id \ge \frac{1-\alpha}{d}\cdot \id
   $$
   Particularly,  the above inequality holds if $\rho$ is a rank-one projection. Then the Markov-Dobrushin constant of $\mathcal{M}_{\alpha}$ satisfies
   $$
   \kappa_{\mathcal{M}_{\alpha}}\ge  \frac{1-\alpha}{d}\cdot \id >0
   $$
   Thus, according to Theorem \ref{thm_main}, the channel $\mathcal{M}_{\alpha}$ exhibits exponential mixing. Applying the trace to the above  inequality, we get
   $$
   \Tr(\kappa_{\mathcal{M}_{\alpha}}) \ge 1-\alpha\quad  \Longrightarrow \quad  \theta_{\mathcal{M}_\alpha} = -\ln(1-\Tr(\kappa_{\mathcal{M}_{\alpha}}))\ge - \ln(\alpha)
   $$
   This finishes the proof.
 \end{proof}

\section{Ergodicity and mixing of mixed-unitary channels}\label{sect_Erg_mix}
In this section, we explore the ergodicity and mixing properties of mixed-unitary quantum channels. We begin by demonstrating a limitation in the case of unistochastic quantum channels, showing that they are not mixing. Then, we introduce mixed-unitary quantum channels, which are convex combinations of unistochastic channels. Following this, we investigate the ergodicity of average quantum channels associated with finite groups of unitary matrices. We prove that such channels are ergodic if and only if they coincide with completely depolarized channels. Lastly, we examine the mixing behavior of qubit channels, distinguishing between cases where the channels are mixing or stationary.
\subsection{Limitation in  the case of Unistochastic quantum channels}

\begin{proposition}{\rm
The unistochastic quantum channel associated with the unitary matrix \(U\), defined as:
\begin{equation}\label{eq:M-uni}
    \mathcal{M}_{U}(a) = UaU^*,
\end{equation}
is not mixing.}
\end{proposition}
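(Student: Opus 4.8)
The plan is to show that $\mathcal{M}_U$ acts on differences of states as a norm-preserving rotation, so that the quantity appearing in the mixing condition (\ref{eq_df_mixing}) stays pinned at a fixed positive value rather than decaying. First I would compute the iterates explicitly. Since $\mathcal{M}_U(a) = UaU^*$, a one-line induction gives $\mathcal{M}_U^n(a) = U^n a (U^n)^*$, with $U^n$ again unitary. Hence, for any $\rho, \sigma \in \mathfrak{S}(\mathcal{H})$,
$$
\mathcal{M}_U^n(\rho) - \mathcal{M}_U^n(\sigma) = U^n(\rho - \sigma)(U^n)^*.
$$

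The key step is to exploit unitary invariance of the norm $\|\cdot\|_{TV}$ from (\ref{df-q-tot-var-nrm}). Conjugation by a unitary $V$ commutes with taking real and imaginary parts, since $\Re(VaV^*) = V\Re(a)V^*$ and likewise for $\Im(a)$, and it carries the positive and negative parts of a self-adjoint operator to those of its conjugate, leaving the traces in (\ref{df-q-tot-var-nrm}) unchanged. Therefore $\|VaV^*\|_{TV} = \|a\|_{TV}$ for every $a$, and applying this with $V = U^n$ yields
$$
\|\mathcal{M}_U^n(\rho) - \mathcal{M}_U^n(\sigma)\|_{TV} = \|\rho - \sigma\|_{TV}
$$
for all $n$.

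To finish, I would pick two distinct density operators $\rho \neq \sigma$, which exist whenever $\dim\mathcal{H} \geq 2$. Then the right-hand side is a strictly positive constant independent of $n$, so the left-hand side cannot tend to $0$. Because all norms on the finite-dimensional algebra $\mathcal{B}(\mathcal{H})$ are equivalent, failure of the limit for $\|\cdot\|_{TV}$ is failure for every norm, so the mixing condition (\ref{eq_df_mixing}) is violated and $\mathcal{M}_U$ is not mixing.

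I do not anticipate a real obstacle: the entire argument reduces to the observation that unitary conjugation is an isometry for $\|\cdot\|_{TV}$. The only point needing a little care is verifying this invariance directly from the definition (\ref{df-q-tot-var-nrm}); alternatively one can sidestep it by noting that $U^n(\rho-\sigma)(U^n)^*$ is unitarily equivalent to $\rho-\sigma$, hence has the same spectrum, so it is nonzero for every $n$ and thus has fixed nonzero norm for any choice of norm.
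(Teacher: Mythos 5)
Your proof is correct and follows essentially the same route as the paper's: both rest on the observation that conjugation by a unitary is an isometry for the total-variation norm, so the distance between iterates stays constant instead of decaying. The paper instantiates this on a pair of orthogonal pure states, whose images under $\mathcal{M}_U^n$ remain orthogonal rank-one projectors at distance $2$, whereas you prove the unitary invariance of $\|\cdot\|_{TV}$ in general; the difference is cosmetic.
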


\begin{proof}
For every \(a = \sum_{i,j}a_{ij}|i\rangle\langle j|\in\mathcal{B}(\mathcal{H})\), the action of \(\mathcal{M}_{U}(a)\) can be expressed as:
\[
    \mathcal{M}_{U}(a)= \sum_{i,j}\sum_{k,l}U_{ik}\overline{U_{jl}} a_{kl}|i\rangle\langle j|
\]

Let \(\xi = \sum_{i}\xi_{i} \, |i\rangle \in \mathcal{H}\) such that \(\|\xi\| = \sum_{i}|\xi_i|^2 = 1\). Since \(|\xi\rangle\langle\xi| = \sum_{kl}\xi_k\overline\xi_l|k\rangle\langle l|\), we have:
\begin{equation}\label{eq_Mu_exp}
    \mathcal{M}_{U}\Big(|\xi\rangle\langle\xi|\Big)= \sum_{i,j}\sum_{k,l}(U_{ik}\xi_k)(\overline{U_{jl} \xi_l})\, |i\rangle\langle j|
     = \Big|U\xi \rangle\langle U\xi\Big|
\end{equation}
If \(\xi\) and \(\nu\) are two orthogonal unit vectors from \(\mathcal{H}\), then \(U^n\xi\) and \(U^n\nu\) are also orthogonal. This implies that the projections \(\Big|U^n\xi \rangle\langle U^n\xi\Big|\) and \(\Big|U^n\nu \rangle\langle U^n\nu\Big|\) have disjoint supports. However,
\[
\Big\|\mathcal{M}^n_{U}(|\xi\rangle\langle \xi|) - \mathcal{M}^n_{U}(|\nu\rangle\langle \nu|)\Big\|_{TV} = \Big\| |U^n\xi\rangle\langle U^n\xi|  - |U^n\nu\rangle\langle U^n\nu| \Big\|_{TV} = 2
\]
which violates (\ref{eq_df_mixing}). This completes the proof.
\end{proof}
\begin{remark}
   Since  \(\mathcal{M}_{U}\) maps rank-one projections into themselves,  \(\kappa_{\mathcal{M}_{U}} = 0\). Hence, Theorem \ref{thm_main} cannot determine whether the unistochastic channel is mixing or not. However, in the above result we use an alternatively technique to show  that \(\mathcal{M}_U\) is indeed not mixing.
\end{remark}
\subsection{Mixed-unitary quantum channels}
Let $N$ be an integer. Let $U_1,U_2,\cdots, U_N$ be a collection  of  unitary operators acting on $\mathcal{H}$ and a probability distribution $p = (p_1,\dots, p_d)$ be  a probability . Put
\begin{equation}\label{df_mixUnit-Chann}
  \mathcal{M}(a) = \sum_{\ell=1}^{N}p_\ell U_iaU_\ell^{*};\quad a\in\mathcal{B}(\mathcal{H})
\end{equation}
Recall that, the map $\mathcal{M}$ defined by (\ref{df_mixUnit-Chann}) is called   \textit{mixed-unitary} channel \cite{ben2005}  on $\mathcal{B}(\mathcal{H})$. It is a convex combination of unistochastic channels. From (\ref{eq_Mu_exp}) we find
$$
\mathcal{M}(|\xi\rangle\langle \xi|) = \sum_{\ell=1}^{N}p_\ell |U_{\ell}\xi\rangle\langle U_{\ell}\xi |
$$
\begin{theorem}
Let $G$ be a finite group of unitary matrices. The average quantum channel associated with the group $G$ is defined as
\[
\mathcal{M}_{G}(a) = \frac{1}{|G|}\sum_{U\in G} U a U^*, \quad \text{for all } a\in\mathcal{B}(\mathcal{H}),
\]
The channel $\mathcal{M}_{G}$ is ergodic if and only if it coincides with the completely depolarized channel $\Omega$, defined by (\ref{depola_channel}).
\end{theorem}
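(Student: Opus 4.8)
The plan is to leverage the group structure of $G$ to show that $\mathcal{M}_G$ is idempotent, which makes ergodicity collapse into the very rigid requirement that $\mathcal{M}_G$ send every state to one and the same operator.

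First I would establish the key structural identity $\mathcal{M}_G^2 = \mathcal{M}_G$. For any $a\in\mathcal{B}(\mathcal{H})$,
\[
\mathcal{M}_G^2(a) = \frac{1}{|G|^2}\sum_{U,V\in G}(UV)\,a\,(UV)^{*} = \frac{1}{|G|}\sum_{W\in G} W a W^{*} = \mathcal{M}_G(a),
\]
where the middle equality uses that, for each fixed $U\in G$, the map $V\mapsto UV$ is a bijection of $G$ onto itself. Hence $\mathcal{M}_G^{k}=\mathcal{M}_G$ for every $k\ge 1$, and the Cesàro averages simplify: since $\mathcal{M}_G^{0}=\mathcal{I}$,
\[
\frac{1}{n+1}\sum_{k=0}^{n}\mathcal{M}_G^{k}(\rho)=\frac{1}{n+1}\big(\rho+n\,\mathcal{M}_G(\rho)\big)\xrightarrow[n\to\infty]{}\mathcal{M}_G(\rho),\qquad \rho\in\mathfrak{S}(\mathcal{H}).
\]

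Next I would invoke the definition of ergodicity. If $\mathcal{M}_G$ is ergodic, this limit must be a fixed density $\rho_*$ that does not depend on $\rho$; thus $\mathcal{M}_G(\rho)=\rho_*$ for every $\rho\in\mathfrak{S}(\mathcal{H})$. To pin down $\rho_*$, note that unitarity gives $\mathcal{M}_G(\id)=\frac{1}{|G|}\sum_{U\in G}UU^{*}=\id$; applying the previous identity to the maximally mixed state $\rho=\frac1d\id$ yields $\rho_*=\frac1d\mathcal{M}_G(\id)=\frac1d\id$. Therefore $\mathcal{M}_G(\rho)=\frac1d\id=\Omega(\rho)$ for all $\rho\in\mathfrak{S}(\mathcal{H})$, and since the density operators span $\mathcal{B}(\mathcal{H})$ and both maps are linear, I conclude $\mathcal{M}_G=\Omega$. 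The converse is immediate: $\Omega$ was already shown to be mixing (with $\kappa_\Omega=\frac1d\id$, so $\Tr(\kappa_\Omega)=1$), hence a fortiori ergodic, so if $\mathcal{M}_G=\Omega$ it is ergodic.

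The only genuinely substantive step is the idempotency $\mathcal{M}_G^2=\mathcal{M}_G$, which rests on $G$ being a group rather than an arbitrary family of unitaries; this is precisely what forces all higher iterates to coincide with $\mathcal{M}_G$ and converts the (weak) Cesàro/ergodic condition into the rigid pointwise identity $\mathcal{M}_G(\cdot)=\Tr(\cdot)\,\rho_*$. Everything after that is routine linear algebra, the only mild care being the spanning argument that upgrades agreement on states to equality of the two channels.
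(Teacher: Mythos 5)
Your proposal is correct and follows essentially the same route as the paper: idempotency $\mathcal{M}_G^2=\mathcal{M}_G$ from the group structure, collapse of the Ces\`aro average to $\mathcal{M}_G(\rho)$, identification of the common image with $\frac{1}{d}\id$ via unitality, and extension to $\mathcal{B}(\mathcal{H})$ by linearity. You in fact supply two details the paper leaves implicit (the bijection argument for idempotency and the explicit computation $\mathcal{M}_G(\id)=\id$), but the underlying argument is the same.
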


\begin{proof}
The completely depolarizing channel $\Omega$ can be interpreted as a mixed unitary channel. Since $\Omega$ is mixing, it is also ergodic.

Conversely, $\mathcal{M}_{G}$ is a mixed-unitary channel with uniform probability $p_U = \frac{1}{|G|}$ for every $U\in G$. It can be observed that $\mathcal{M}_G^2 = \mathcal{M}_{G}$. Particularly, for every density operator $\rho$, we have
\[
\frac{1}{n+1}\sum_{k=0}^{n} \mathcal{M}_G^k(\rho) = \frac{n}{n+1}\mathcal{M}_G(\rho) + \frac{1}{n+1}\rho \longrightarrow \mathcal{M}_G(\rho) \quad \text{as } n\longrightarrow \infty.
\]
Hence, $\mathcal{M}_{G}$ is ergodic if and only if $\mathcal{M}_G(\rho) = \mathcal{M}_G(\sigma)$ for any $\rho, \sigma\in \mathfrak{S}(\mathcal{H})$, the space of density operators.

This implies that for every $\sigma\in\mathfrak{S}(\mathcal{H})$, we have
\[
\mathcal{M}_G(\rho) = \mathcal{M}_G\left(\frac{1}{d}\cdot \mathrm{id}\right) = \frac{1}{d}\cdot \mathrm{id} = \Omega(\rho),
\]
where $d$ is the dimension of $\mathcal{H}$. Therefore, $\mathcal{M}_{G}$ and $\Omega$ coincide on $\mathfrak{S}(\mathcal{H})$. Due to the linearity of $\mathcal{M}_G$ and the generic decomposition of any element $a\in\mathcal{B}(\mathcal{H})$ into positive matrices $a = \mathrm{Re}(a)_+ - \mathrm{Re}(a)_- + i \mathrm{Im}(a)_+ - i \mathrm{Im}(a)_-$, this coincidence extends to $\mathcal{B}(\mathcal{H})$.
\end{proof}

\subsection{Mixing of the Qubit Channel}

Let $\mathcal{H} = \mathbb{C}^2$, equipped with its canonical basis
$$
|1\rangle = \begin{pmatrix} 1 \\ 0 \end{pmatrix}\quad ; \quad |2\rangle = \begin{pmatrix} 0 \\ 1 \end{pmatrix}
$$
 Consider the Pauli spin matrices:
$$
\id = \begin{pmatrix} 1 & 0 \\ 0 & 1 \end{pmatrix}; \quad \sigma_x = \begin{pmatrix} 0 & 1 \\ 1 & 0 \end{pmatrix}; \quad \sigma_y = \begin{pmatrix} 0 & -i \\ i & 0 \end{pmatrix}; \quad \sigma_z = \begin{pmatrix} 1 & 0 \\ 0 & -1 \end{pmatrix}
$$

Let $\alpha\in (0,1)$. Define the map $\mathcal{M}(a) = \sum_{i=0}^{3}K_i^{*} a K_i$, where
$$
K_0 = \sqrt{1-\alpha}\; I, \quad K_1 = \sqrt{\frac{\alpha}{3}}\,\sigma_x, \quad K_2 = \sqrt{\frac{\alpha}{3}}\,\sigma_y, \quad K_3 = \sqrt{\frac{\alpha}{3}}\,\sigma_z
$$
From a geometric perspective, the depolarizing channel $\mathcal{M}$ can be visualized as a uniform contraction of the Bloch sphere, controlled by parameter $\alpha$. When $\alpha = 1$, the channel transforms any input state $\rho$ into the maximally-mixed state, resulting in a complete contraction of the Bloch sphere to the single-point $\frac{1}{2}\cdot\id$ at the origin.

Using the matrix representation $a = \begin{pmatrix} a_{11} & a_{12} \\ a_{21} & a_{22} \end{pmatrix}$ in the basis $\{|1\rangle, |2\rangle\}$, it is evident that $\mathcal{M}$ defines a quantum channel. The action of $\mathcal{M}$ on $a$ is given by:
$$
\mathcal{M}(a) = \begin{pmatrix} (1- \frac{2}{3}\alpha)a_{11} + \frac{2}{3}\alpha a_{22} & & (1- \frac{4}{3}\alpha)a_{12} \\
\\
 (1-\frac{4}{3}\alpha)a_{21} &  & (1- \frac{2}{3}\alpha)a_{22} + \frac{2}{3}\alpha a_{11} \end{pmatrix}
$$
The unique fixed density matrix of $\mathcal{M}$ is $\rho_{*} = \frac{1}{2}\cdot \id$.\\
The associated quantum Markov-Dobrushin constant is:
$$
\kappa_{\mathcal{M}} = \inf\left\{\mathcal{M}(|\xi\rangle\langle \xi|): \xi\in \mathbb{C}^2, \|\xi\| =1 \right\} = \frac{2\alpha}{3}\cdot \id
$$
Two cases are distinguished:
\begin{enumerate}
  \item For $\alpha\neq \frac{3}{4}$, the quantum channel $\mathcal{M}$ is mixing. For every initial state $\rho\in\mathfrak{S}(\mathcal{H})$, the dynamics $\mathcal{M}^n(\rho)$ converge to the unique fixed point $\rho_* = \frac{1}{2}$.
  \item For $\alpha = \frac{3}{4}$, every $\rho\in\mathfrak{S}(\mathcal{H})$ yields $\mathcal{M}(\rho) =\frac{1}{2}\cdot\id$. Thus, $\mathcal{M}$ is stationary. In this case the considered Qubit channel coincides with the completely depolarizing channel (\ref{depola_channel}) which is maximally mixed.
\end{enumerate}

\section{Conclusion}\label{sect_concl}
This paper delves into the behavior of quantum channels, focusing on ergodicity and mixing properties. Our main contribution lies in establishing a fundamental inequality governing the transformation of quantum states under the action of a quantum channel. By means of the quantum Markov-Dobrushin inequality, we provide a sufficient condition for mixing, revealing exponential convergence properties for channels with positive Markov-Dobrushin constants.

Moreover, we highlight the limitations of certain quantum channels. For instance, unistochastic channels, while bistochastic, do not display mixing behavior, as demonstrated through a counterexample. Furthermore, we explore mixed-unitary channels, offering insights into their behavior regarding mixing and convergence rates.

Our results have significant implications for quantum information processing tasks, enabling better understanding and utilization of quantum channels. Future research could extend these findings to more complex systems such as the inhomogeneous quantum dynamics in connection with complex quantum walks.

\section*{CRediT authorship contribution statement}  Abdessatar Souissi:Conceptualization, Investigation, Resources, Methodology, Writing original draft, Writing – review \& editing. Abdessatar Barhoumi: Visualization, Project administration,Methodology, Resources. \section*{Data availability} No data was used for the research described in the article
\section*{Conflicts of Interest}
The authors declare that they have no conflicts of interest.

\end{document}